\newcommand{\defeq}{\vcentcolon=}
\newcommand{\rdefeq}{=\vcentcolon}
\renewcommand\P{\mathcal{P}}
\newcommand\M{\mathcal{M}}
\newcommand\R{\mathcal{R}}
\newcommand\doubleC{\mathbb{C}}
\renewcommand\H{\mathcal{H}}
\newcommand\A{\mathcal{A}}
\newcommand\U{\mathcal{U}}
\newcommand{\circoverset}[1]{\mathring{#1}}
\newcommand{\domega}{{\circoverset{\omega}}}
\newcommand{\dOmega}{\circoverset{\Omega}}
\newcommand{\bgamma}{{\bar\gamma}}
\newcommand{\bchi}{{\bar\chi}}
\newcommand{\bu}{{\bar u}}
\newcommand{\bc}{{\bar c}}
\newcommand{\homega}{{\widehat{\omega}}}
\newcommand{\hOmega}{{\widehat{\Omega}}}
\newcommand{\halpha}{{\widehat{\alpha}}}
\newcommand{\hpsi}{{\widehat{\psi}}}
\newcommand{\hPsi}{{\widehat{\Psi}}}
\newcommand{\hA}{{\widehat{A}}}
\newcommand{\hF}{{\widehat{F}}}
\newcommand{\hD}{{\widehat{D}}}
\newcommand{\tphi}{\widetilde{\phi}}
\newcommand\rarrow{\rightarrow}
\newcommand\LieH{\mathfrak{h}}
\newcommand\w{\wedge}
\newcommand\Em{^{-1}}
\DeclareMathOperator{\Iso}{Iso}
\DeclareMathOperator{\Aut}{Aut}
\DeclareMathOperator{\Tr}{Tr}
\DeclareMathOperator{\Ad}{Ad}
\DeclareMathOperator{\id}{id}
\newtheorem{thm}{Theorem}
\newtheorem{prop}[thm]{Proposition}
\theoremstyle{definition}
\newtheorem{definition}[thm]{Definition}
\begin{document}

\title{Nucleon spin decomposition and differential geometry}
\author{J. François, S. Lazzarini, T. Masson}
\date{}
\maketitle
\begin{center}
Centre de Physique Théorique \\
Aix Marseille Université \& Université de Toulon \&  CNRS  (UMR 7332)\\
 Case 907, 13288 Marseille, France
\end{center}

\begin{abstract}
In the last few years, the so-called \emph{Chen et al. approach} of  the nucleon spin decomposition has been widely discussed and elaborated on. In this letter we propose a genuine differential geometric understanding of this approach. We mainly highligth its relation to the ``dressing field method'' we  advocated in \cite{GaugeInv}. We are led to the conclusion that the claimed gauge-invariance of the \emph{Chen et al}. decomposition is actually unreal. 
\end{abstract}

\textit{Keywords} : nucleon spin decomposition, differential geometry, gauge field theories, QCD.

\vspace{1mm}

PACS numbers: 11.15.-q, 02.40.Ma, 12.38.-t, 14.20.Dh.

\vspace{2mm}

\begin{multicols}{2}

\section*{Introduction}  

Since more than 25 years, state of the art experiments  have shown that the quarks spin contribute to about a third of the nucleon spin, while the gluons spin have been deemed to contribute little. Actually  recent results \cite{deFlorian-et-al14} tend to put into question this claim, the gluons may significantly contribute after all. The remaining fraction should then be attributed to the orbital angular momentum (OAM) of the quarks and gluons. See \cite{LorceGeomApproach, Wakamatsu14} and references therein. Anyway these experimental breakthrough have revived theoretical inquiries on the nucleon spin decomposition in QCD.

There are several decompositions each with advantages and drawbacks (see \cite{Leader-Lorce} for details), the most salient concerning a trade between partonic interpretation and gauge-invariance.  The Belifante decomposition splits the angular momentum of the nucleon $J_\text{n}$ as two contributions from the angular momentum of the quarks and gluons: $J_\text{n}=J_\text{Bel}^\text{q}+J_\text{Bel}^\text{g}$. Both terms are gauge-invariant thus observable/measurable, but give no further understanding of the contributions of the quarks and gluons spins and OAM. The Ji decomposition provides a further split (up to an exact term) 
of the quarks contribution into spin and OAM: $J_\text{n}=S_\text{Ji}^\text{q}+L_\text{Ji}^\text{q}+J_\text{Ji}^\text{g}$. Each term is gauge-invariant but  
 gluons spin and OAM are still not disentangled. The Jaffe-Monohar decomposition improves on the latter by splitting the gluons contribution: 
 $J_\text{n}=S_\text{JM}^\text{q}+L_\text{JM}^\text{q}+S_\text{JM}^\text{g}+L_\text{JM}^\text{g}$. Nevertheless here only $S_\text{JM}^\text{q}$ is gauge-invariant. Up to this point it seemed there was a conflict between a clear partonic interpretation of the contributions and their gauge-invariance.

A few years ago Chen \& al. \cite{Chen-et-al08, Chen-et-al09} proposed a new decomposition based on the ansatz that the gauge potential can be split into pure gauge and physical  parts: $J_\text{n}=S_\text{Chen}^\text{q}+L_\text{Chen}^\text{q}+S_\text{Chen}^\text{g}+L_\text{Chen}^\text{g}$. Each piece appears now to be gauge-invariant so that the decomposition seems to finally give satisfaction.  
However the non-manifest Lorentz covariance of the approach  raised questions. 
Several authors addressed this issue. Wakamatsu \cite{Wakamatsu11}  developed a manifestly Lorentz covariant version of the Chen \& al. approach. Then Lorcé \cite{LorceGeomApproach}  attempted a geometrical interpretation  and further critically discussed the issue of Lorentz covariance.  Moreover he identified, following Stoilov \cite{Stoilov}, a problem of non-uniqueness (in the splitting of the gauge potential)  referred to as a ``Stueckelberg symmetry''. See also \cite{LorceWilson, LorceCanForm, Lorce14, Leader-Lorce}.

Here we propose a genuine differential geometric basis to the Chen \& al. ansatz and highlight its relation to the ``dressing field method'' to construct gauge-invariants presented in \cite{GaugeInv}. In doing so we deem that we circumvent the discussion about Lorentz covariance, clarify the status of the so-called Stueckelberg symmetry and prevent possible misconceptions. 
The paper is organized as follows. In section 1 we recall the basics of the bundle geometry underlying  gauge theories. In section 2 we give the simplest version of the dressing field method and stress important interpretive points. In section 3 we propose a differential geometric formulation of the Chen \& al. ansatz and apply the dressing field method. Both the global and local constructions are given so as to keep in touch with the pre-existing literature. The crucial question of the arbitrariness on the choice of the dressing field is addressed. In section 4 we gather our comments and then conclude.

\section{The geometry of gauge theories}  
\label{The geometry of gauge theories}  

The geometry of fibered spaces is the natural framework of gauge theories (see e.g \cite{DeAzc-Izq, Bertlmann, GockSchuck, Nakahara} for pedagogical introductions). One starts with a principal bundle $\P(\M, H)$ over space-time $\M$ with structure group $H$. A choice of connection $1$-form  $\omega \in \Lambda^1(\P, \LieH)$  is needed to define horizontality on $\P$. It satisfies,
\begin{itemize}
\item $\omega_p:V_p\P \rarrow \LieH$ is an isomorphism, $V_p\P$ being the vertical subspace of the tangent space $T_p\P$ at $p\in \P$.
\item $\R^*_h\omega=\Ad_{h\Em }\omega$, where  $\R_h$ is the right-action of the group. The connection is said \emph{pseudo-tensorial of type $(\Ad, \LieH)$}.
\end{itemize}
 The horizontal subspace complementary to $V_p\P$ in $T_p\P$ is defined by $H_p\P=\ker \omega_p$. The associated curvature $2$-form $\Omega\in\Lambda^2(\P, \LieH)$ is given by the Cartan structure equation: 
 \begin{align*}
 \Omega=d\omega +\tfrac{1}{2}[\omega, \omega]=d\omega+ \omega^2,
\end{align*}
 where $\omega^2=\omega \w \omega$ is the exterior product of forms, and the last equality holds when $\omega$ is matrix-valued. The curvature satisfies $\R^*_h\Omega=\Ad_{h\Em }\Omega$, and since it is horizontal (i.e it vanishes on $V\P$) it is said \emph{tensorial of type $(\Ad, \LieH)$}. Given a representation $(\rho, V)$ of $H$, an equivariant map $\Psi : \P \rarrow V$ satisfies $\R^*_h\Psi=\rho(h\Em )\Psi$. Since a map $\Psi \in C^\infty(\P)\simeq \Lambda^0(\P)$ is trivially horizontal, it is a tensorial $0$-form of type $(\rho, V)$. The covariant derivative of a tensorial $r$-form $\beta$ of type $(\rho, V)$ w.r.t $\omega$ is given by, $D^\omega\beta \defeq d\beta + \rho_*(\omega)\w\beta$. It is a $r+1$-tensorial form of type $(\rho, V)$. And $(D^\omega)^2\beta=\rho_*(\Omega)\w \beta$.
 
Given an open set $\U\subset \M$ and a local section $\sigma:\U \rarrow \P$, one can pull-back the above defined forms on $\M$. The local connection $1$-form $A=\sigma^*\omega \in \Lambda^1(\U, \LieH)$ describes the gauge potential. The local curvature $2$-form $F=\sigma^*\Omega \in \Lambda^2(\U, \LieH)$ describes the field strength of the potential. Locally the structure equation reads,
 \begin{align*}
 F=dA +\tfrac{1}{2}[A, A]=dA+A^2.
 \end{align*}
Matter fields are of the form $\psi=\sigma^*\Psi$, and their covariant derivatives are $D^A\psi=d\psi + \rho_*(A)\psi=\sigma^*D^\omega\Psi$. 
  
The gauge group of the principal bundle $\P$ is, 
\begin{align*}
\H=\big\{ \bgamma: \P \rarrow H \ |\ \R^*_h\bgamma(p) = h\Em \bgamma(p) h\big\}.
\end{align*}
It is isomorphic to the group  of vertical automorphisms of the principal bundle defined by,
\begin{multline*}
\Aut_V(\P)=\big\{ \Phi: \P \rarrow \P \ |\  \R^*_h\Phi(p)=\Phi(p)h,\\
                                                               \pi \circ \Phi(p)=\pi(p)=x\in \M \big\},
 \end{multline*}
  and whose composition law is $(\Phi_1 \Phi_2)(p) \defeq \Phi_2^*\Phi_1(p)=\Phi_1\circ\Phi_2(p)$. The isomorphism is given by $\Phi(p)=p\bgamma(p)$. 
The gauge group $\H$ acts geometrically on itself by
conjugacy action,
\begin{align}
\label{gauge group law}
\R_{\bgamma_2}^*\bgamma_1 \rdefeq \bgamma_1^{\bgamma_2^{}}=\bgamma_2\Em \bgamma_1^{}\bgamma_2^{}, \quad \bgamma_1^{}, \bgamma_2^{} \in \H,
\end{align}
and its action on global objects is
\begin{align}
\label{global activeGT}
\omega^{\bgamma}&=\bgamma\Em \omega \bgamma +\bgamma\Em d\bgamma,
\qquad \Omega^{\bgamma}=\bgamma\Em \Omega \bgamma, \notag\\[-2.5mm]
& \\[-2.5mm]
\Psi^{\bgamma}&= \rho(\bgamma\Em )\Psi, \qquad
(D^\omega\Psi)^{\bgamma}=\rho(\bgamma\Em )D^\omega\Psi. \notag
\end{align}
These are \emph{active gauge transformations}.\footnote{The \emph{passive gauge transformations} simply reflect another choice of local section. Given another open set $\U'\subset M$ with local  trivializing section $\sigma':\U'\rarrow \P$, such that on $\U'\cap \U$ on has $\sigma'=\sigma h$, with $h:\U'\cap\U \rarrow H$. One finds that the  pull-backs on each open set are related by $ A'=h^{-1}Ah +h\Em dh$, $F'=h^{-1}Fh$, $\psi'=\rho(h^{-1})\psi$ and $D^{A'}\psi'=\rho(h^{-1})D^A\psi$.
  This covers the case $\U'=\U$. Active and passive gauge transformations are formally similar. But notice that  passive gauge transformations relate two local descriptions of the \emph{same} global objects, while active gauge transformations relate \emph{different} global objects.} 
  The local version of the active gauge transformations reads, 
\begin{align}
\label{local activeGT}
A^\gamma&=\gamma\Em A\gamma +\gamma\Em d\gamma,\qquad  F^\gamma =\gamma\Em
F\gamma,& \notag\\[-2.5mm]
& \\[-2.5mm]
\psi^\gamma&=\rho(\gamma\Em )\psi,\qquad  (D^A\psi)^\gamma
=\rho(\gamma\Em)D^A\psi,& \notag
\end{align}
where $\gamma=\sigma^*\bgamma:\U\rarrow H$ belongs to the local gauge group $\H_\text{loc}(\U)$ over $\U$.

 We denote by $\A_\text{loc}(\U)$ the affine space of local connections on $\U$, on which the local gauge group $\H_\text{loc}(\U)$ acts. 

A gauge theory is specified by a Lagrangian form $L(A, \psi)$ which is required to be gauge-invariant: $L^\gamma=L$. Actually since the action is $S=\int_{\M} L$, the quasi-invariance of the Lagrangian form, $L^\gamma=L+da$, (i.e invariance up to a $d$-exact term), is enough to preserve the equations of motion given suitable boundary conditions, or if one works on a manifold $\M$ without boundary.

Let us stress the fact that in gauge field theories, any field is defined in a space on which acts the gauge group $\H$ of the theory (this action can be trivial). In other words, a field is more than a ``map between spaces'', since the action of $\H$ \emph{has to be} defined also. In this paper, all the actions of $\H$ will be induced by the \emph{geometrical} action defined by pull-back of objects.

\section{The dressing field method}  
\label{The dressing field method}  

The dressing field method \cite{GaugeInv} is a systematic way to reduce gauge symmetries. In its simplest version, it relies on the identification, within the theory, of a field defined as follows.

\begin{definition}\label{def-dressingfield}
A \emph{dressing field} is a map $\bu:\P \rarrow H$ with equivariance property $\R^*_h\bu=h\Em \bu$ and on which the action of the gauge group $\H$ is thus
\begin{align}
\label{global dressing law}
\bu^\bgamma=\bgamma \Em \bu.
\end{align}
\end{definition}

Given such a field $\bu$, one can define the following \emph{projectable} composite fields (i.e horizontal and gauge-invariant),
\begin{align}
\label{global comp fields}
\homega &\defeq \bu\Em \omega \bu + \bu\Em d\bu, \qquad \hOmega \defeq
\bu\Em \Omega \bu, \notag \\[-2.5mm]
& \\[-2.5mm]
\hPsi &\defeq \rho(\bu\Em )\Psi, \qquad \hD\hPsi \defeq\rho(\bu\Em )D^\omega\Psi.\notag
\end{align}
 The proof of the gauge invariance is straightforward. It is also easy to show that 
\begin{align*}
\hOmega = d\homega +\tfrac{1}{2}[\homega, \homega]=d\homega + \homega^2.
\end{align*}
This is the ``Main Lemma'' at the heart of the method presented in \cite{GaugeInv}.

 Given a trivializing section $\sigma:\U\rarrow \P$, we have the local dressing field, $u=\sigma^*\bu: \U \rarrow H$. The local gauge group $\H_\text{loc}(\U)$ acts as, 
\begin{align}
\label{local dressing law}
u^\gamma=\gamma\Em u.
\end{align}
The pull-backs of \eqref{global comp fields} give
 the \emph{gauge-invariant composite fields},
\begin{align}
\label{local comp fields}
\hA  &\defeq u\Em Au+u\Em du, \qquad \hF \defeq u\Em Fu, \notag \\[-2.5mm]
& \\[-2.5mm]
\hpsi &\defeq \rho(u\Em )\psi, \qquad \hD\hpsi \defeq \rho(u\Em )D^A\psi.\notag
\end{align}
These are known in the literature as \emph{generalized Dirac variables} \cite{Leader-Lorce, McMullan-Lavelle, Pervushin, Lantsman}. 
 Indeed Dirac \cite{Dirac55, Dirac58} pioneered the idea of working with invariant variables in QED proposing an explicit (non-local) realization of \eqref{local comp fields} in the abelian case $H=U(1)$. Thus the dressing field method, in its simplest version, is the geometrical foundation of the notion of Dirac variables. 

Notice that \eqref{local comp fields} are invariant under \emph{both} active \emph{and} passive gauge transformations. This means e.g that if one has $\hA'$ on $\U'$ and another has $\hA$ on $\U$, the two local descriptions agree on $\U'\cap\U$, $\hA'=\hA$. In other words $\hA$ is  \emph{globally defined} on $\M$. This is what it means for $\homega$ on $\P$ to be projectable: its projection $\hA$ on $\M$ is globally defined: $\pi^*\hA=\homega$. The same reasoning holds for $\hF$, $\hpsi$ and $\hD\hpsi$ or any dressed field.

\medskip
Let us stress that despite formal similarity \eqref{global comp fields} \emph{are not} active gauge transformations \eqref{global activeGT}. This is clear from the fact that owing to its transformation law \eqref{global dressing law}, different from \eqref{gauge group law}, the dressing field does not belong to the gauge group, $\bu\notin \H$. It is even more clear from the fact that e.g $\homega$ is not in the gauge orbit of $\omega$. Indeed $\homega$ is projectable, i.e is horizontal and gauge-invariant, so that it is not even a connection on $\P$. 

 In the same way \eqref{local comp fields} are not local description of the active gauge transformations \eqref{local activeGT} because $u\notin \H_\text{loc}(\U)$. In particular $\hA \notin \A_\text{loc}(\U)$ (see for instance Prop.~\ref{prop-homega-omega})  and the other gauge-invariant composite fields in \eqref{local comp fields} are not in the gauge orbits of $F$, $\psi$ and $D\psi$ respectively. This highlights the fact that the dressing field method is distinct from a gauge-fixing procedure. Indeed the latter consists in selecting a single representative in the gauge orbit of the gauge fields. Nonetheless the dressing field method could be a perfect substitute to the gauge-fixing. 

\medskip
If one can perform a change of field variables as \eqref{local comp fields}  in the theory, since the dressing if $H$-valued and due to the $\H_\text{loc}(\U)$-invariance of the Lagrangian, one has $L(A, \psi) = L(\hA, \hpsi)$.
Thus, one ends up with a Lagrangian form written in terms of gauge-invariant variables so that any equation or quantity derived from it will be automatically gauge-invariant as well. One can anticipate the benefit of this fact for the question of the nucleon spin decomposition.

Notice that even if one could find a gauge transformation $\gamma \in \H_\text{loc}(\U)$ such that $\gamma=u$, so that $A^\gamma=\hA$ and $\psi^\gamma=\hpsi$, these \emph{are obviously not} gauge-invariant equalities. Nevertheless it is true that $L(A^\gamma, \psi^\gamma)=L(A, \psi)= L(\hA, \hpsi)$.
This means that it may happen that a Lagrangian obtained through the dressing field method is mistaken for a gauge-fixed Lagrangian. We argued in \cite{Masson-Wallet, GaugeInv} that it is for example the case of the Lagrangian of the electroweak sector of the Standard Model in the so-called unitary gauge. The confusion is in this case especially prejudicial since the dressing field method provides  sensible interpretive shifts w.r.t the standard viewpoint.

\section{Geometric grounds of the Chen \& al. ansatz}  
\label{Geometric ground of the Chen et al. approach}  

\subsubsection*{The geometric ansatz} 

The Chen \& al. approach assumes the ansatz that the gauge potential splits as a ``pure-gauge'' and ``physical'' part: $A=A_\text{pure}+A_\text{phys}$. A splitting implicitly defined by the requirement that $F_\text{pure}=0$ and by the gauge transformations  $A_\text{pure}^\gamma =   \gamma\Em A_\text{pure} \gamma +\gamma\Em d\gamma$ and $A_\text{phys}^\gamma = \gamma\Em A_\text{phys} \gamma$. For the construction to make sense globally, we have also to assume that this pure gauge connection is \emph{globally} defined on the base manifold $\M$. Writing $A_\text{pure}=U_\text{pure} d U\Em_\text{pure}$, then $U_\text{pure}$ has been interpreted in  \cite{LorceGeomApproach} as a ``privileged basis in the internal space'', i.e a privileged point in the gauge orbit of any field. We will discuss this interpretation in the next section.

At the level of the principal fiber bundle $\P$, the Chen \& al. ansatz states 
 the existence of a (global) connection $\domega$ with vanishing curvature $\dOmega = 0$. 
Let us recall the following standard result of fiber bundle geometry: 
 \begin{thm}{ \cite[Corollary~9.2]{Kob-NomI}}\\ 
 \label{thm-trivialbundle} 
 Let $\P(\M, H)$ be a principal fiber bundle over a paracompact and simply connected manifold $\M$. Let  $\domega$ be a connection on $\P$ with vanishing curvature $\dOmega=0$. Then, there is a isomorphism of $H$-principal fiber bundles $\phi:\P \rarrow \M\times H$ such that $\domega = \phi^* \pi_H^*\omega_H$, where $\pi_H: \M\times H \rarrow H$ is the projection on the second factor, and $\omega_H$ is the Maurer-Cartan form on $H$.
\end{thm}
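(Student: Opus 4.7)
The plan is to show that the flatness condition $\dOmega = 0$ makes the horizontal distribution $H\P = \ker \domega$ integrable, and then exploit the simple connectivity of $\M$ to globalize a horizontal section, which trivializes the bundle in a way that trivializes $\domega$.

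First, I would rewrite the structure equation $0 = \dOmega = d\domega + \tfrac{1}{2}[\domega, \domega]$ and observe that for any two horizontal vector fields $X, Y$ on $\P$ (i.e.\ sections of $\ker\domega$), we have
\begin{align*}
\domega([X,Y]) = X\!\cdot\!\domega(Y) - Y\!\cdot\!\domega(X) - d\domega(X,Y) = \tfrac12[\domega(X),\domega(Y)] = 0,
\end{align*}
so $[X,Y]$ is again horizontal. By the Frobenius theorem, $H\P$ is integrable, and $\P$ is foliated by maximal connected horizontal leaves. Each leaf is transverse to the fibers and of the same dimension as $\M$, and $H$-translates of a leaf are again leaves (by right-invariance of $\domega$ up to $\Ad$).

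Next, I would fix $p_0 \in \P$ with $\pi(p_0) = x_0$, and let $L_0$ be the leaf through $p_0$. The restriction $\pi|_{L_0}\colon L_0 \to \M$ is a local diffeomorphism because $T_pL_0 = H_p\P$ is mapped isomorphically onto $T_{\pi(p)}\M$. The key point is to show $\pi|_{L_0}$ is a covering map: this follows because horizontal lifts of paths in $\M$ exist and are unique (standard bundle fact), and because $L_0$ is closed in $\pi^{-1}(U)$ for small $U$. Then paracompactness gives us the covering structure, and since $\M$ is simply connected, $\pi|_{L_0}\colon L_0 \to \M$ is a diffeomorphism. This defines a global smooth section $s_0 \defeq (\pi|_{L_0})^{-1}\colon \M \to \P$ whose image is exactly the horizontal leaf $L_0$, so in particular $s_0^*\domega = 0$.

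Finally, I would define the isomorphism $\phi^{-1}\colon \M \times H \to \P$ by $(x,h) \mapsto s_0(x)\cdot h$, which is clearly an $H$-equivariant diffeomorphism commuting with projection to $\M$, hence a principal bundle isomorphism; set $\phi$ to be its inverse. To verify $\domega = \phi^*\pi_H^*\omega_H$, I would check the identity on vectors tangent to the fibers of $\M \times H$ and on vectors tangent to the slice $\M \times \{e\}$. On the fiber direction, both sides reduce to the Maurer–Cartan form by the defining property of a connection ($\domega$ restricted to vertical vectors is the canonical isomorphism with $\LieH$, and likewise for $\omega_H$ via right translation). On the slice $\M \times \{e\}$, the right-hand side vanishes (since $\pi_H$ is constant there), while the left-hand side pulls back to $s_0^*\domega = 0$; $H$-equivariance of both sides then extends the identity to all of $\M \times H$.

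The main obstacle I foresee is the step showing that $\pi|_{L_0}$ is a covering and not merely a local diffeomorphism: this is where paracompactness enters (to get uniform neighborhood sizes for horizontal lifts) and where simple connectivity is ultimately used (to upgrade the covering to a trivial one, yielding the global section). Everything else is essentially bookkeeping with equivariance and the Maurer–Cartan form.
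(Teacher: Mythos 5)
The paper does not prove this statement itself but quotes it as Corollary~9.2 of Kobayashi--Nomizu, and your argument is precisely the standard proof of that corollary: flatness makes $\ker\domega$ integrable by Frobenius, the leaf through a point covers the simply connected base $\M$ (via existence and uniqueness of horizontal lifts) and hence gives a global horizontal section, and the induced trivialization carries $\domega$ to $\pi_H^*\omega_H$. Your sketch is correct, and you rightly flag the only delicate step (upgrading the local diffeomorphism $\pi|_{L_0}$ to a covering); Kobayashi--Nomizu phrase the same argument in terms of the triviality of the holonomy group of a flat connection over a simply connected base, which is an equivalent packaging of your leaf construction.
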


From now on, we will assume that $\M$ is paracompact and simply connected. The isomorphism  $\phi:\P \rarrow \M\times H$ of the theorem can be written as
\begin{align}
\label{isophi}
\phi(p)=\left( \pi(p),\ \bu\Em (p) \right),
\end{align}
where the map $\bu: \P \rarrow H$  displays the equivariance property $\R_h^*\bu=h\Em \bu$ in order to get $\phi(p h)=\phi(p)h$. Then a direct computation shows that $\domega = \bu d \bu\Em $. The existence of $\bu$ is related to the existence of $\domega$ (equivalently, of $A_\text{pure}$).  Let us now look at the action of the gauge group.

A gauge transformation $\domega \mapsto \domega^{\bgamma}$ is induced by an automorphism $\Phi \in \Aut_V(\P)$ and, with the notation 
\begin{equation}
\label{eq-gaugeactiononphi}
\phi^{\bgamma} = \Phi^* \phi = \phi \circ \Phi,
\end{equation}
one can check that $\domega^{\bgamma} = {\phi^{\bgamma}}^* \pi_H^*\omega_H$. In other words, the relation $\domega = \phi^* \pi_H^*\omega_H$ is $\H$-covariant for the respective natural geometrical actions of $\H$ on $\domega$ and on $\phi$. This leads naturally to define the (geometrical) right action of $\H$ on $\bu$ by $\phi^{\bgamma}(p) =  \left( \pi(p),\ ({\bu}^{\bgamma})\Em  (p) \right)$, and one gets ${\bu}^{\bgamma} = {\bgamma}\Em  {\bu}$. 

\medskip
To sum-up, the Chen \& al. ansatz is  equivalent to the following geometrical assumption, see Definition~\ref{def-dressingfield}: 

\emph{There is a dressing field $\bu : \P \rarrow H$.}

\smallskip

We have already shown that the Chen \& al. ansatz implies the existence of the dressing field $\bu$. Let us deduce the Chen \& al. ansatz from the dressing field. Following \cite[Proposition 2]{GaugeInv}, the existence of $\bu$ implies that the principal fiber bundle $\P$ is trivial, and \eqref{isophi} defines an explicit isomorphism. Let $\domega =  \bu d \bu\Em $. Then for any global section $\sigma : \M \rarrow \P$ (take $\sigma(x) = \phi\Em (x,e)$ for instance), $A_\text{pure} = \sigma^* \domega$ satisfies the requirements of the Chen \& al. ansatz.

Any connection $\omega$ on $\P$ can then be decomposed as $\omega = \domega + \alpha$ where $\alpha$ is a $(\Ad, \LieH)$-tensorial $1$-form. For any $\bgamma \in \H$, one then has $\omega^\bgamma = \domega^\bgamma + \alpha^\bgamma$ with $\domega^\bgamma = \bu^\bgamma d (\bu^\bgamma)\Em  = \bgamma\Em  \domega \bgamma + \bgamma\Em  d \bgamma$ and $\alpha^\bgamma = \bgamma\Em  \alpha \bgamma$. Then the curvature of $\omega$ is
\begin{align*}
\Omega& = d\omega + \omega^2 
= d\domega + d\alpha +\domega^2  +\domega\alpha +\alpha\domega+ \alpha^2,  \notag\\
& = d\alpha +[\domega, \alpha] + \alpha^2, \notag \\
& = D^\domega \alpha + \alpha^2,
\end{align*}
and the covariant derivative $D^\omega$ is
\begin{align*}
D^{\omega}\Psi & = d\Psi +\rho_*(\domega +\alpha) \Psi, \notag \\
&= d\Psi +\rho_*(\domega )\Psi +\rho_*(\alpha) \Psi, \notag\\
&= D^{\domega }\Psi + \rho_*(\alpha) \Psi.
\end{align*}
 Defining on $\M$ the fields $A \defeq \sigma^* \omega$, $A_\text{pure} \defeq \sigma^* \domega$, $A_\text{phys} \defeq \sigma^* \alpha$, $F \defeq\sigma^* \Omega$ and $D^A\psi \defeq \sigma^* (D^\omega \Psi)$, the above expressions pull-back as,
\begin{align}
 A &= A_\text{pure} + A_\text{phys} = udu\Em  + A_\text{phys},   \label{eq-local-split} \\
F&=D_\text{pure} A_\text{phys}+ A_\text{phys}^2 \notag\\
&= dA_\text{phys} + [udu\Em , A_\text{phys}] \ +\  A_\text{phys}^2, \notag  \\
D^A\psi&=D_\text{pure} \psi + \rho_*(A_\text{phys})\psi, \notag \\
&=d\psi +\rho_*(udu\Em )\psi \ + \ \rho_*(A_\text{phys})\psi, \notag
\end{align}
where $D_\text{pure}$ is the (local) covariant derivative associated to $A_\text{pure} = udu\Em$. We here use notations standard in the literature. With $\gamma \defeq \sigma^* \bgamma$, the gauge transformation of the gauge potential is
\begin{align}
A^\gamma&=A_\text{pure}^\gamma + A_\text{phys}^\gamma,\notag \\
&= (\gamma\Em A_\text{pure} \gamma +\gamma\Em d\gamma)\ + \ \gamma\Em A_\text{phys} \gamma      \label{eq-GT-split}
\end{align}
Equations \eqref{eq-local-split} and \eqref{eq-GT-split} reproduce the Chen \& al. ansatz. The Chen \& al. decomposition, and other decompositions it has inspired, for instance the Wakamatsu decomposition, make wide use of the objects $A_\text{pure}$, $A_\text{phys}$ and the operator $D_\text{pure}$ (called ``pure-gauge covariant derivative'').

 It is often stressed that in a gauge such that $A_\text{pure}=0$, the Chen \& al. decomposition reduces to the Jaffe-Monohar decomposition, so that the former is seen as a  ``gauge-invariant extension'' of the latter.  Since the Chen \& al. ansatz is equivalent to the existence of a dressing field $\bu$,  instead of merely fixing a gauge, one can suitably apply the dressing field method and define the ($\H$-gauge invariant) composite fields:
\begin{align*}
\homega &\defeq \bu\Em  \omega \bu + \bu\Em  d \bu \\
&= \bu\Em  (\bu d \bu\Em  + \alpha) \bu + \bu\Em  d \bu \\
&= \bu\Em  \alpha \bu \rdefeq \halpha, \\
\hOmega & \defeq \bar u\Em  \Omega \bar u, \\
\hPsi &\defeq \rho(\bar u\Em )\Psi, \\
\hD\hPsi &\defeq \rho(\bar u\Em )D^{\omega}\Psi= d\hPsi + \rho_*(\halpha )\hPsi .
\end{align*}
As fields on $\M$, let us define
\begin{align*}
\hA &\defeq \sigma^* \homega = \sigma^* \halpha = u\Em A_\text{phys} u\rdefeq \hA_\text{phys}, \\
\hF &\defeq \sigma^* \hOmega = d\hA_\text{phys} + \hA_\text{phys}^2 = u\Em F u, \\
\hpsi &\defeq \sigma^* \hPsi = \rho(u\Em )\psi,  \\
\hD\hpsi&=d\hpsi +\rho_*(\hA_\text{phys})\hpsi.
\end{align*}
   Notice that the composite field associated to $\domega$ vanishes, so that $\hA_\text{pure}=0$, as well as $\hA_\text{pure}^\gamma=0$ for any $\gamma = \sigma^* \bgamma$.

Now we can perform the change of variables in the Lagrangian form:
\begin{align*}
L(A, \psi) &= \tfrac{1}{2} \Tr(F\w *F) + \langle \psi, D\psi  \rangle, \\
&= \tfrac{1}{2} \Tr(\hF\w *\hF) + \langle\hpsi, \hD\hpsi  \rangle \\
& \rdefeq L(\hA_\text{phys}, \hpsi).
\end{align*}
Here $*$ is the Hodge star operator and $\langle\ ,\ \rangle$ is the inner product in the representation space $V$. 

  In the case of QCD, $V=\doubleC^3$ and $\langle \psi, \psi \rangle = \psi^\dagger\psi$. 
Starting with the Lagrangian $L_{QCD}(\hA_\text{phys}, \hpsi)$ it is now possible to write canonically a decomposition of the nucleon spin in every respect similar to the Jaffe-Monohar one. This decomposition thus displays a clear partonic interpretation and is gauge-invariant.

\subsubsection*{Arbitrariness in the choice\\ of the dressing field} 

Let $\Iso_V(\P, \M \times H)$ be the space of vertical principal fiber bundles isomorphisms $\P \rarrow \M \times H$. The group $\Aut_V(\P)$ 
 acts naturally on this space by \eqref{eq-gaugeactiononphi}. This action is transitive: for any $\phi_1, \phi_2 \in   \Iso_V(\P, \M \times H)$, $\Phi = \phi_2\Em  \circ \phi_1 \in \Aut_V(\P)$ 
  relates $\phi_1$ and $\phi_2$ by \eqref{eq-gaugeactiononphi}.
  
    Denote 
   $\Aut_V(\M \times H)$ the  group of vertical automorphisms of the trivial fiber bundle $\M \times H$. It acts on the right on $\Iso_V(\P, \M \times H)$ by $\phi \mapsto \Xi\Em  \circ \phi$ for any $\Xi \in \Aut_V(\M \times H)$ 
   considered as a map $\Xi : \M \times H \rarrow \M \times H$. This action is also transitive.

The gauge group of $\M\times H$ is thus, 
\begin{align*}
\H_0=\big\{ \bchi\!:\!\M\!\times\!H\!\rarrow\!H | \R^*_h\bchi(x, h')\!=\!h\Em\bchi(x, h')h  \big\}
\end{align*}
and its isomorphism with $\Aut_V(\M\times H)$ is given as usual by,  $\Xi(x, h) = (x, h \bchi(x,h))$. 
 
   Another description of $\Xi$ is given by a map $\bc : \M \times H \rarrow H$ defined by $\Xi(x, h) = (x, \bc(x,h) h)$. This map satisfies $\bc(x, h h') = \bc(x, h)$ for any $x \in \M$ and $h, h' \in H$ so that it depends only on $x$. We define $c : \M \rarrow H$ by $c(x) \defeq \bc(x,e)$. The induced action of $\Xi$ on $\bu$, written in terms of $c$, is given by $\bu_c(p) = \bu(p)\, c\circ \pi(p)$ for any $p \in \P$.

The choice of $A_\text{pure}$ in the original Chen \& al. ansatz is related to the choice of $\bu$ in our geometrical assumption by $A_\text{pure} = \sigma^* \domega = \sigma^*\bu d \bu\Em  = u d u\Em $ for $u = \sigma^*\bu$. The dressing field $\bu$ is completely characterized by its associated element $\phi \in \Iso_V(\P, \M \times H)$ given by \eqref{isophi}, so that the choice of $A_\text{pure}$ is related to the choice of a (global) trivialization $\phi$ of $\P$. This choice is not unique, and, given a fixed reference element $\phi_0 \in \Iso_V(\P, \M \times H)$, any other element can be obtained by the action of $\H$ or the action of $\H_0$ on $\phi_0$.

By construction, the composite field $\homega$ is $\H$-invariant, but it is not $\H_0$-invariant. For any $c \in \H_0$, one has
\begin{align}
\homega_c &=  \bu_c\Em  \omega \bu_c + \bu_c\Em  d \bu_c  \label{eq-actionconhomega}\\ 
&= c\Em  \bu\Em  \omega \bu c + c\Em  (\bu\Em  d \bu) c + c\Em  d c \notag \\
&= c\Em  \homega c + c\Em  d c,\notag \\
  \hOmega_c&=c\Em \hOmega c,     \notag \\
 \hPsi_c&=\rho(c\Em )\hPsi,   \qquad   \text{and} \qquad   \hD_c\hPsi_c=\rho(c\Em )\hD\hPsi.  \notag
\end{align}
On the other hand, one has
\begin{align}
\label{domega-c}
\domega_c=\bu_c d \bu_c\Em =\domega+ \bu(cdc\Em)\bu\Em,
\end{align}
and since $\omega$ is a connection on $\P$, it is invariant under $\H_0$,  $\omega_c=\omega$, so that
\begin{align}
\label{alpha-c}
\alpha_c=\alpha-\bu(cdc\Em)\bu\Em.
\end{align}

Let $\phi \in \Iso_V(\P, \M \times H)$ and $\bu : \P \rarrow H$ its associated map, which is a dressing field. Let $\sigma_e : \M \rarrow \M \times H$ be the section $\sigma_e(x) \defeq (x, e)$. 

\begin{prop}
\label{prop-homega-omega}
As a $1$-form on $\P$, one has
\begin{equation*}
\homega = (\sigma_e \circ \pi)^* {\phi\Em }^* \omega .
\end{equation*}
In other words, $\homega$ is related by $(\sigma_e \circ \pi)^*$ to the connection $1$-form $\omega_0 \defeq {\phi\Em }^* \omega$ on the trivial bundle $\M \times H$.

Let $A_0 \defeq \sigma_e^* \omega_0$. Then $\hA = A_0$ on $\M$.
\end{prop}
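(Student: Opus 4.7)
The plan is to recognise $(\sigma_e \circ \pi)^* {\phi\Em}^* \omega$ as the pullback of $\omega$ along an explicit self-map $\tau$ of $\P$ of the form $p \mapsto p \cdot g(p)$, and then appeal to the pullback identity that is the heart of the Main Lemma recalled in Section~\ref{The dressing field method}. The local statement $\hA = A_0$ then follows by a one-line chain-rule manipulation using that $\sigma$ is a section.

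First I would identify the composite $\tau \defeq \phi\Em \circ \sigma_e \circ \pi : \P \rarrow \P$ explicitly. Since $(\sigma_e \circ \pi)(p) = (\pi(p), e)$ and, by \eqref{isophi}, $\phi\Em(x, h)$ is characterised by $\pi\bigl(\phi\Em(x,h)\bigr) = x$ and $\bu\bigl(\phi\Em(x,h)\bigr) = h\Em$, the point $\tau(p)$ lies in the fiber over $\pi(p)$ with $\bu(\tau(p)) = e$. Writing $\tau(p) = p \cdot h$ for some $h \in H$ and invoking the equivariance $\R_h^* \bu = h\Em \bu$ forces $h = \bu(p)$, so $\tau(p) = p \cdot \bu(p)$.

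Second, I would pull $\omega$ back along $\tau$. The classical identity for the pullback of a connection along a map of the form $p \mapsto p \cdot g(p)$, obtained by splitting $T_p\tau$ into its $\R_{\bu(p)}$-push and its vertical piece generated by $T_p\bu$ and using $\R_h^*\omega = \Ad_{h\Em}\omega$ together with the definition of $\omega$ on fundamental vector fields, gives
\[
\tau^* \omega \;=\; \bu\Em\, \omega\, \bu + \bu\Em\, d\bu \;=\; \homega.
\]
This is precisely the Main Lemma of Section~\ref{The dressing field method} applied with $g = \bu$, so no new computation is required. It establishes the first identity of the proposition, and the reformulation in terms of $\omega_0 \defeq {\phi\Em}^* \omega$ on $\M \times H$ is then just a change of notation.

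Finally, for the local statement I would apply $\sigma^*$ to both sides and use $\pi \circ \sigma = \id_\M$ to collapse the chain of pullbacks:
\[
\hA \;=\; \sigma^* \homega \;=\; (\sigma_e \circ \pi \circ \sigma)^* {\phi\Em}^* \omega \;=\; \sigma_e^* {\phi\Em}^* \omega \;=\; A_0.
\]
The only delicate step in this plan is the pullback identity $\tau^*\omega = \bu\Em \omega \bu + \bu\Em d\bu$; but since the Main Lemma is already available in the paper, this is effectively off the shelf and the remainder of the proof is bookkeeping.
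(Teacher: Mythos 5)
Your proposal is correct and follows essentially the same route as the paper: both identify the composite $\phi\Em \circ \sigma_e \circ \pi$ with the map $p \mapsto p\,\bu(p)$ (the paper computes $\phi(p\,\bu(p)) = (\pi(p), e)$ directly, while you characterise $\phi\Em$ and solve for the fiber translation, which is equivalent), then invoke the off-the-shelf identity $f_\bu^*\omega = \bu\Em\omega\bu + \bu\Em d\bu = \homega$ (cited in the paper from \cite[p.~8]{GaugeInv}), and finish with the same chain $\hA = \sigma^*\pi^*\sigma_e^*\omega_0 = A_0$ using $\pi\circ\sigma = \id_\M$. The only cosmetic difference is that you attribute the pullback identity to the ``Main Lemma'' of Section~\ref{The dressing field method}, whereas that lemma is strictly the structure-equation statement $\hOmega = d\homega + \homega^2$; the identity you actually need is the separate fact $\homega = f_\bu^*\omega$, which the paper sources from \cite{GaugeInv}.
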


\begin{proof}
Let $f_\bu : \P \rarrow \P$ be defined by $f_\bu(p) \defeq p \bu(p)$. Then one has
\begin{align*}
\phi(f_\bu(p)) &= \phi( p \bu(p)) 
= (\pi(p), \bu(p \bu(p))\Em ) \\
&= (\pi(p), \bu(p)\Em \bu(p)  ) 
= (\pi(p), e)
\\
&= \sigma_e \circ \pi(p),
\end{align*}
which implies $f_\bu = \phi\Em  \circ \sigma_e \circ \pi$. It is shown in \cite[p.~8]{GaugeInv} that $\homega = f_\bu^* \omega$, so that $\homega = (\phi\Em  \circ \sigma_e \circ \pi)^* \omega = (\sigma_e \circ \pi)^* {\phi\Em }^* \omega$.

We have $\hA = \sigma^* \homega = \sigma^* \pi^* \sigma_e^* \omega_0 = (\pi \circ \sigma)^* A_0 = A_0$ since $\pi \circ \sigma = \id_\M$.
\end{proof}

Since $\homega=\halpha$, $\hA$ depends only on the tensorial part of $\omega$. This is supported by the fact that it can be shown that $ (\phi\Em  \circ \sigma_e)^*\domega=0$ so that the pure part of the connection disappears completely because $d (\bu\Em \circ \phi\Em  \circ \sigma_e)= d (x\mapsto e) = 0$.

In this proof, we have shown that the following diagram is commutative:
\begin{equation*}
\xymatrix{ 
{(\P, \homega)} \ar[rr]^-{f_\bu} \ar[dr]_-{\sigma_e \circ \pi} & {} & {(\P, \omega)} \ar[dl]^-{\phi} \\
{}  & {(\M \times H, \omega_0)} & {}
}
\end{equation*}
The action \eqref{eq-actionconhomega} of $\H_0$ on $\homega$ is nothing else that the (natural) action of $\H_0$ on $\omega_0$ (as a connection $1$-form on $\M \times H$). Notice also that the equality $\hA = A_0$ shows that $\hA \notin\A_\text{loc}(\M)$, and that $\hA$ belongs to the affine space of local connections on $\M \times H$. This shows a twofold feature of the field $\hA$: It is $\H_\text{loc}(\U)$-invariant but a  $\H_0$-connection. The triviality of $\P$ gives in some extent a ``swing effect'' and morally $\hA$ remains a gauge field.

From this result, we conclude that the arbitrariness in the choice of the dressing field $\bu$ (or in the choice of $A_\text{pure}$ in the original Chen \& al. ansatz) is related to the transitive action of the gauge group $\H_0$ of $\M \times H$, which is isomorphic to $\H$.\footnote{Given a fixed element $\phi_0 \in \Iso_V(\P, \M \times H)$, the map $\tphi_0 : \H \rarrow \H_0$ defined by $\tphi_0(\Phi) \defeq \phi_0\Em \circ \Phi \circ \phi_0$ is an isomorphism.} This arbitrariness has been interpreted as a ``Stueckelberg symmetry''  in \cite{Stoilov, LorceGeomApproach}, see discussion below.

Another point of view is to look at the action of $c$ on $\hA = A_0$ as the transformation induced by a change of local trivialization of $\M \times H$, from the (global) section $\sigma_e$ to the new section $x \mapsto (x, c(x))$.

\section{Discussion} 
\label{Discussion} 

In this section we stress how the present work contributes to the clarification of some questions within the literature seeded by  the Chen \& al. approach to the nucleon spin decomposition.

The Lorentz covariance of the initial Chen \& al. ansatz has been early questioned. Subsequently much energy has been deployed in order to settle the question, and now the consensus seems to agree on the Lorentz covariance of the approach (see \cite{LorceGeomApproach, Leader-Lorce, Wakamatsu11}).  Here we have provided a sound differential geometric basis to Chen \& al. ansatz, the object involved are thus differential forms which are intrinsically defined. This secures \emph{general relativistic} covariance. Indeed the base manifold $\M$ on which we localized our construction is an arbitrary manifold, not necessarily reduced to the Minkowski space. In our view this is an improvement that easily circumvent much of the concerns and discussions around the question of the Lorentz covariance. 

We have shown that at a global level, the Chen \& al. anzats amounts to assume tacitly the triviality of the underlying principle bundle $\P$. This in turn is equivalent to the existence of a dressing field as defined in \cite{GaugeInv}. The dressing field method reproduces transparently the construction first proposed in the geometric section of \cite{LorceGeomApproach}.  However in the literature the status of the dressing field $u$ (often denoted $U_\text{pure}$) is not always clear. On the one hand we find in \cite{Leader-Lorce, LorceCanForm} caveats as to not mistake  expressions like \eqref{local comp fields} for gauge transformations. On the other hand it is often said \cite{LorceGeomApproach, Leader-Lorce, LorceWilson, LorceCanForm} that $u$ specifies a ``privileged basis'' in the internal space (admittedly, in our framework $u$ defines at most a ``privileged trivialization'' of $\P$).
 As a matter of fact in \cite{LorceGeomApproach, LorceWilson}  expressions like \eqref{local comp fields} are referred to as fields in the ``natural gauge/basis''. This could be understood as suggesting that $u$ is a gauge transformation sending the gauge fields into  privileged points of their respective gauge orbits. 

Here we clarify this point: $u$ being a dressing field it does not belong to the gauge group $\H_\text{loc}(\U)$, so \eqref{local comp fields} are not gauge transformations and none of these composite fields belong to the gauge orbits of the initial gauge fields. Recall e.g that $\hA$ is not even a connection, $\hA \notin \A_\text{loc}(\U)$.

\medskip
The most delicate point to discuss is the arbitrariness in the choice of the dressing field 
which yields transformations of the initial gauge variables \eqref{domega-c}-\eqref{alpha-c} and of the dressed variables~\eqref{eq-actionconhomega}. The local version of these equations are usually referred to as a ``Stueckelberg symmetry'' (see Appendix).
Actually we've shown that this arbitrariness is controlled by the gauge group $\H_0$  of the trivial bundle $\M\times H$, which is isomorphic to the initial gauge group $\H$. So the situation after the dressing operation mirrors the situation before it. Indeed, before dressing, $\H_\text{loc}(\U)$ acts on the set of original variables 
  by \eqref{local activeGT}, but $\H_{0, \text{loc}}(\U)$ acts trivially. And after dressing, $\H_\text{loc}(\U)$ acts trivially on the set of dressed variables, but $\H_{0, \text{loc}}(\U)$ acts by (the local version of) \eqref{eq-actionconhomega} which mimics \eqref{local activeGT}. See table~\ref{table1}. 

\begin{center}
\begin{tabular}{cc}
\toprule
	 $\H_\text{loc}(\U)$	  & 	$\H_{0, \text{loc}}(\U)$   \\
\midrule
	$A^\gamma=\gamma^{-1}A\gamma+\gamma\Em d\gamma$	  & 	$A_c=A$      \\
    $F^\gamma=\gamma^{-1}F\gamma $					  &          $F_c=F$       \\ 
  $\psi^\gamma=\rho(\gamma^{-1})\psi$   				  &    $\psi_c=\psi$    \\ 
 $(D\psi)^\gamma=\rho(\gamma^{-1})D\psi$ 				  &    $D\psi_c=D\psi$     \\ 
\midrule
	$\hA^\gamma=\hA$ 				  & 	$\hA_c=c^{-1}\hA c +c\Em dc$      \\
    $\hF^\gamma=\hF $					  &          $\hF_c=c^{-1}Fc$       \\ 
  $\hpsi^\gamma=\hpsi$   				  &    $\hpsi_c=\rho(c^{-1})\hpsi$    \\ 
 $(\hD\hpsi)^\gamma=\hD\hpsi$ 				  &    $\hD\hpsi_c=\rho(c^{-1})\hD\hpsi$     \\ 
\bottomrule
\end{tabular}
\captionof{table}{Action of the gauge groups $\H_\text{loc}(\U)$ and $\H_{0, \text{loc}}(\U)$ on the local fields before and after dressing.}
\label{table1}
\end{center}

Of course the Lagrangian after dressing, $L(\hA, \hpsi)$, is invariant under $\H_{0, \text{loc}}(\U)$, but the gauge-invariant Jaffe-Monohar decomposition of the nucleon spin obtained from it, is not. So one faces the same problem one had with the original gauge symmetry $\H_\text{loc}(\U)$. 

To be clear, what is usually referred to as a ``Stueckelberg symmetry'' is actually the original gauge symmetry in disguise. Then, nothing is really gained by performing the Chen \& al. split, and/or by applying the dressing field method here. The seemingly erased gauge symmetry, 
 eventually reappears. 

The reason is actually simple: one cannot hope to produce something from nothing. With just the connection $\omega$, i.e the gauge potential $A$, one lacks the other degrees of freedom necessary to genuinely neutralize the gauge symmetry. 
In order to make the dressing field method really effective, we have observed that the dressing field $u$ should be ``extracted'' from an auxiliary field \emph{already given}  in the whole theory at hand. This was indeed the case for the examples treated in \cite{GaugeInv}: in the case of the electroweak sector of the standard model $u$ is extracted from the scalar field $\varphi$; in the case of General Relativity, $u$ is the vielbein extracted from the soldering form $\theta$ which is a component of a Cartan connection. 
This is also  what does the field $B$, introduced beside the gauge potential $A$, in the Stueckelberg trick (see Appendix).
 In these examples the arbitrariness in the choice of the dressing field is not isomorphic to the gauge group: either it is an interesting symmetry (the coordinate changes in the case of gravitation) or it can be drastically reduced by an additional physical consideration (in the case of the electroweak sector of the standard model, see \cite{Masson-Wallet}). For the Stueckelberg trick it is even reduced to nothing. 
 
 Concerning the problem of the nucleon spin decomposition, all this discussion suggests to find out a natural candidate for an auxiliary field from which a relevant dressing field ought to be extracted. This issue is out of the scope of the present paper.

\section*{Conclusion}  
\label{Conclusion}  

 In this letter we give a reconstruction of the Chen \& al. ansatz using a differential geometric framework which is economic, transparent and secures general relativistic covariance.  In our view this settles the debate about the Lorentz covariance of the original Chen \& al. approach. 
 
   We propose a 
global geometrical 
framework for the Chen \& al. ansatz. We show that this ansatz implies the existence of a dressing field as defined in \cite{GaugeInv}. Then, through the dressing field method, it becomes possible to rewrite the Lagrangian of the theory with a priori gauge-invariant variables (see bottom part of Table \ref{table1}). Accordingly, a Jaffe-Monohar-type decomposition of the nucleon spin which is thus seemingly gauge-invariant, is readily obtained from the new Lagrangian.
 
  However, we clarify the fact that what is usually referred to as a ``Stueckelberg symmetry'' is actually an avatar of the initial gauge symmetry. Hence nothing is really gained in the original Chen \& al. approach, unless one is able to extract the dressing field from an auxiliary field providing the degrees of freedom necessary to genuinely neutralize the gauge symmetry.

\appendix 

\section*{Appendix} 
\label{Appendix}  

The terminology ``Stueckelberg symmetry''  has been used in \cite{Stoilov, LorceGeomApproach} to qualify the arbitrariness in the splitting of the gauge potential $A$ in the Chen \& al. approach. It stems from a formal analogy with the Stueckelberg trick which aims at implementing a gauge symmetry where there is none by introducing a field $B$ with adequate transformation law. Take the prototype $U(1)$ abelian Stueckelberg Lagrangian,
 \begin{align*}
 L(A, B)=\tfrac{1}{2}F\!\w\! *F +   m^2 (A\!-\!\tfrac{1}{m}dB)\!\w\!* (A\!-\!\tfrac{1}{m}dB).
 \end{align*}
 The gauge transformations of the potential $A$ and of the Stueckelberg field $B$ are defined by,
 \begin{align*}
 A^\gamma=A-d\theta, \qquad \text{and} \qquad  B^\gamma=B-m\theta,
 \end{align*}
 so that the variable $A'=A-\tfrac{1}{m}dB$ is gauge invariant. By the way, the abelian case of the above construction gives $A_\text{pure}=udu\Em=-d\chi_\text{pure}$,  $cdc\Em=-d\xi$ and
 \begin{align*}
 A_{\text{phys}, c}= A_\text{phys}+d\xi, \quad  \text{and} \quad \chi_{\text{pure}, c}=\chi_\text{pure} +\xi,
 \end{align*}
 so that $A=A_\text{phys}-d\chi_\text{pure}$ is preserved. 
 One is tempted to formally identify $A_\text{phys}$ with $A$. But according to the main text, this is misleading since the former is tensorial and the latter is a gauge potential. However $\chi_\text{pure}$ plays the role  of the Stueckelberg field $B$.
 
 Actually the transformation for the Stueckelberg field $B$ is precisely an abelian version of \eqref{local dressing law}, so that $B$ is indeed a local dressing field. This means that $A'$ and $F'=F$ are abelian instances of \eqref{local comp fields}. Applying the dressing field method on the Stueckelberg Lagrangian one has
 \begin{align*} 
 L(A, B)=L(A')=\tfrac{1}{2}F'\w*F' + m^2A'\w A'.
 \end{align*}
 This Lagrangian written in terms of gauge invariant variables describes a theory where the $U(1)$ gauge symmetry has been completely factorized out. To the extent that a parallel between the Stueckelberg trick and the Chen \& al. approach can be drawn, it stems from the observation that both are tightly related to the dressing field method. 
 
  A more detailed discussion between the Stueckelberg trick and the dressing field method has been given in~\cite{GaugeInv}.

\bibliography{Biblio}

\end{multicols}

\end{document}